\newtheorem{theorem}{Theorem}
\newtheorem{lemma}{Lemma}
\newtheorem{definition}{Definition}
\newtheorem{assumption}{Assumption}
\definecolor{color_rich}{RGB}{234, 17, 12}
\newcommand{\R}{\mathbb{R}}
\newcommand{\C}{\mathbb{C}}
\renewcommand{\Re}[1]{\operatorname{Re}\left\{#1\right\}}
\renewcommand{\Im}[1]{\operatorname{Im}\left\{#1\right\}}
\newcommand{\conj}[1]{\mkern 1.5mu\underline{\mkern-1.5mu#1\mkern-1.5mu}\mkern 1.5mu}
\renewcommand{\P}{\operatorname{Pr}}
\newcommand{\E}{\operatorname{E}}
\renewcommand{\j}{j}
\newcommand{\<}{\langle}
\renewcommand{\>}{\rangle}
\newcommand{\T}{\mathsf{T}}
\newcommand{\trace}[1]{\operatorname{trc}\left(#1 \right)}
\newcommand{\ip}[2]{\left\<#1, #2\right\>}
\newcommand{\norm}[1]{\left|\left|#1\right|\right|}
\newcommand{\Expec}[2][]{\E_{#1}\left\{#2\right\}}
\newcommand{\Prob}[2][]{\P_{#1}\left\{#2\right\}}
\newcommand{\p}[1]{\left(#1\right)}
\newcommand{\s}[1]{\left[#1\right]}
\renewcommand{\c}[1]{\left\{#1\right\}}
\newcommand{\abs}[1]{\left|#1\right|}
\newcommand{\set}[1]{\mathcal{#1}}
\DeclareMathOperator*{\argmin}{\operatorname{arg~min}}
\DeclareMathOperator*{\argmax}{\operatorname{arg~max}}
\newcommand{\vone}{\mathds{1}}
\newcommand{\mId}{\mathbf{I}}
\newcommand{\setA}{\set{A}}
\newcommand{\setD}{\set{D}}
\newcommand{\setE}{\set{E}}
\newcommand{\setF}{\set{F}}
\newcommand{\setG}{\set{G}}
\newcommand{\setN}{\set{N}}
\newcommand{\setO}{\set{O}}
\newcommand{\setS}{\set{S}}
\newcommand{\st}{\operatorname{s.\!t.}}
\newcommand{\sgn}{\operatorname{sgn}}
\newcommand{\diag}{\operatorname{diag}}
\renewcommand{\j}{\mathrm{j}}
\newcommand{\pmax}{\overline{p}}
\newcommand{\pmin}{\underline{p}}
\renewcommand{\j}{\mathrm{j}}
\DeclareSymbolFont{sfoperators}{OT1}{cmss}{m}{n}
\DeclareSymbolFontAlphabet{\mathsf}{sfoperators}
\def\operator@font{\mathgroup\symsfoperators}
\title{\LARGE \bf
Strategic Electric Distribution Network Sensing via Spectral Bandits
}
\author{Samuel Talkington\textsuperscript{\textsection}, Rahul Gupta\textsuperscript{\textsection}, Richard Asiamah\textsuperscript{\textsection}, Paprapee Buason$^{\dagger}$, and Daniel K. Molzahn\textsuperscript{\textsection}
\thanks{\noindent \textsection: School of Electrical and Computer Engineering,
        Georgia Tech, Atlanta, GA, USA.
        \{talkington, rgupta460, rasiamah3, molzahn\}@gatech.edu. Support from U.S. NSF GRFP DGE-1650044 and PSERC project T-67.}
\thanks{\noindent $^{\dagger}$: Los Alamos National Laboratory. buason@lanl.gov. He is supported by the U.S. DOE Office of Electricity Advanced Grid Modeling program.}%
\vspace{-1em}
}
\begin{document}

\maketitle
\thispagestyle{empty}
\pagestyle{empty}

\begin{abstract}

Despite their wide-scale deployment and ability to make accurate high-frequency voltage measurements, communication network limitations have largely precluded the use of smart meters for real-time monitoring purposes in electric distribution systems. Although smart meter communication networks have limited bandwidth available per meter, they also have the ability to dedicate higher bandwidth to varying subsets of meters. Using this capability to enable real-time monitoring from smart meters, this paper proposes an online bandwidth-constrained sensor sampling algorithm that takes advantage of the graphical structure inherent in the power flow equations. The key idea is to use a spectral bandit framework where the estimated parameters are the graph Fourier transform coefficients of the nodal voltages. The structure provided by this framework promotes a sampling policy that strategically accounts for electrical distance. Maxima of sub-Gaussian random variables model the policy rewards, which relaxes distributional assumptions common in prior work. The scheme is implemented on a synthetic electrical network to dynamically identify meters exposing violations of voltage magnitude limits, illustrating the effectiveness of the proposed method.
  
\end{abstract}

\section{Introduction}
Sensing infrastructure plays a critical role in the control and operation of electricity networks and is crucial for applications such as state estimation, system identification, and data-driven control schemes~\cite{PMU_placement_for_se, pmu_placement, pmu_placement_2, pmu_placement_review}. For transmission networks, observability is maintained by using phasor measurement units and micro-phasor measurement units spread across multiple substations. This allows for continuous monitoring of the entire system. The same level of real-time observability is not available for distribution networks, however. Smart meters with advanced metering infrastructure (AMI) could help address this problem. 


The number of smart meters installed throughout the United States continues to increase, reaching 107 million units in 2021~\cite{smart_meter_statistics}, accounting for more than 50\% of all electricity meters installed throughout the country. By regularly measuring and recording nodal voltages, currents, and power injections~\cite{Hubbel}, smart meters have the capability to increase visibility into distribution networks. The frequency and consistency of the recorded data make them good candidates for real-time control and operation applications in distribution networks. Smart meters have the added benefit of being widely available in existing systems, eliminating the need for new infrastructure investments to increase visibility.

Smart meters send measurements to a central server that performs various calculations, often for customer billing purposes. Smart meters communicate these measurements through channels with limited bandwidth connections to the central server~\cite{smart_meter_pinging}. Although the bandwidth of the communication network is sufficient to send power consumption data at 15-minute to hourly intervals for billing purposes, bandwidth limitations pose a significant challenge for real-time monitoring purposes, especially when the limited bandwidth is shared by many smart meters. However, new smart meters can be dynamically queried so that different subsets of the meters can report data at high frequencies. This motivates the development of new algorithms to take advantage of this capability for real-time monitoring purposes.


This paper focuses on the task of dynamically identifying smart meters to query with the goal of revealing violations of voltage magnitude limits so that system operators can undertake corrective actions. 
With rapidly growing deployments of distributed energy resources that contribute to substantial variations in power injections, system operators would benefit from algorithms for real-time \emph{online} identification of voltage violations. This requires dynamically selecting varying sets of smart meters to identify voltage violations.

\subsection{Related work}
Smart meter data has been used for real-time monitoring in electric distribution networks \cite{baran_real_time_1994,Angioni2016}, and many other applications in load forecasting, demand response, and consumer characterization applications~\cite{smart_meter_analytics}. Moreover, there is a rapidly growing number of works that develop sensor selection methods for various monitoring tasks in infrastructure network settings. The work of \cite{jezdimir-dahan-network-monitoring-2019} developed a game-theoretic approach to network monitoring, while \cite{zhang_bandit_2023} developed a change-point method based on bandit algorithms\textemdash a highly similar line of research to ours. Distinct from the bandit formulation, \cite{xian_adaptive_2023} developed a method to detect change points. 
More broadly, adaptive sampling techniques have seen recent innovation in continuous environments \cite{grant_adaptive_continuous_2019}. Recent work 
developed an Upper Confidence Bound (UCB)-type algorithm where the rewards are extreme quantities derived from a Gaussian process reward function~\cite{yang_output-weighted_2022}.

Additional authors have developed a wealth of methods for sensor placement in generic network settings. A rich literature from the authors of \cite{jezdimir-dahan-network-monitoring-2019, bahamondes_dahan_imperfect_2022, milosevic_dahan_heterogeneous_2023} has developed analytical zero-sum game approaches to time-invariant network inspection. Additional online efforts have explored reinforcement learning algorithms \cite{abramenko_graph_sampling_reinforcement_2019}.

In the setting of electric distribution networks, sensor selection strategies for network topology identification \cite{cavraro_real-time_2020} have been developed with identifiability guarantees. The work in \cite{krause_efficient_2008} developed submodular algorithms for sensing applications in water distribution networks. Nevertheless, online algorithms of this kind have yet to be utilized for strategic detection of network constraint violations.


\subsection{Proposed Framework and Contributions}
This work proposes an online resource-constrained sensor sampling algorithm that exploits the graphical structure inherent in the power flow equations. The key idea is to use a spectral bandit framework, where the parameters we estimate are the graph Fourier transform coefficients of the nodal voltages. The structure provided by this framework promotes a sampling policy that accounts for electrical distances.

This paper is part of an extensive and growing literature on multi-armed bandit algorithms \cite{lattimore_bandit_2020,zeng_online_2016}. We are particularly close to works on context-varying linear bandits~\cite{zeng_online_2016}, the large body of work on bandit algorithms leveraging graph spectral structure \cite{valko_spectral_bandits_2014, kocak_spectral_2014, kocak_spectral_2020, thaker_maximizing_2022}, and combinatorial bandits \cite{yue_linear_2011,cesa2012combinatorial}. Recent work in \cite{thaker_maximizing_2022} introduced a graph-based upper confidence bound algorithm similar to ours.

Our work uses the language of bandits, which is a particular sequential game formulation. This enables us to recursively solve the sensor sampling problem to detect violations of constraints on nodal states in a streaming fashion.  To the best of our knowledge, this is the first such work to do this. 

In summary, the contributions of this research are:
\begin{enumerate}
    \item A stochastic LinDistFlow model, developed in Section~\ref{sec:stochastic-lindistflow}, which is agnostic to the probability distributions of the nodal power injections. 
    \item A sufficient condition for nodal voltage magnitudes to be sub-Gaussian random variables; consequently, this yields concentration inequalities for their maximal fluctuations. In particular, these inequalities explicitly depend on the network model topology and parameters.
    \item A new spectral bandit algorithm, developed in Section~\ref{sec:sensor-sampling-algorithm}, which strategically samples a small subset of nodes to find maximal voltage magnitude fluctuations.   
\end{enumerate}

\subsection{Notation}
For a matrix $A \in \C^{n \times d}$, we denote its transpose as  $A^\T$. 
The $p$ norm of a vector $x \in \R^n$ is denoted as $\norm{x}_p$. The norm of a vector $x \in \R^n$ with respect to matrix $A \in \R^{n \times n}$ is denoted as $\norm{x}_A := \sqrt{x^\T A x}$.
The imaginary unit is $\j := \sqrt{-1}$, and $\conj{x}$ is the conjugate of any complex number $x \in \C$. Expectation and probability are denoted as $\Expec[]{\cdot}$ and $\Prob[]{\cdot}$, respectively. We write $a \lesssim b$ if $a \leq Cb$ for some universal constant $C$. The symbol $\vone$ is a vector of all ones, and $\diag(\cdot)$ is a diagonal matrix with entries given by the argument. 

\section{Problem setting}
\label{sec:problem-formulation}
 

In this paper, we focus on analyzing an undirected graph $\setG$ that represents a distribution network in power systems, with nodes denoted as $\setN := \c{1,\dots,n}$ and lines represented by $\setE \subseteq \setN \times \setN$. The structure of the graph is akin to that of a power distribution network, although the framework we develop could be extended to other network types.
Let $A \in \{-1,0,1\}^{m \times n}$ be the edge-to-node incidence matrix of the network, and let $w := \s{w_{ij}}_{(i,j) \in \setE}$ be a vector of \textit{edge weights} ordered corresponding to the rows of $A$; they both may model self-edge weights. 
%
%
\subsection{Power flow model for distribution systems}
\label{sec:distribution-network-model}
We briefly review the graphical model for a single-phase distribution network. The network admittance matrix $Y := A^\T \diag(w) A \in \C^{n \times n}$ is a weighted graph Laplacian matrix that encodes Kirchhoff's current and voltage laws. 
Let complex nodal voltages be $u \in \C^n$ and nodal power injections be $s \in \C^n$; the two are related by the \emph{injection model} of the power flow equations, which take the form $s = \diag(u)\conj{Y u }$ with real (active) and imaginary (reactive) components given as $p := \Re{s}$ and $q := \Im{s}$, respectively.

\subsection{Review of the LinDistFlow approximation}
We consider a linear power flow approximation known as LinDistFlow to represent the nodal voltage magnitudes in radial distribution networks based on Laplacian matrices; we refer the reader to \cite[Sec. 2]{deka_learning_2024} for additional exposition. Due to the tree structure of distribution networks, the incidence matrix with the first column removed, $A \in \c{-1,0,1}^{n \times n}$, is \emph{square}. Using $r,x \in \R^{m}$ to denote the line resistances and reactances, we can approximate the nodal voltage magnitudes $v \in \R^n$ as $v := v^\bullet \vone + R p + Xq$, where $v^\bullet$ denotes the nominal slack voltage, $R := G^{-1}$ and $X := B^{-1}$, where $G,B\in \R^{n \times n}$ are $n \times n$ Laplacian matrices of the form $G := A^\T \diag(r)^{-1} A$ and $B:= A^\T \diag(x)^{-1} A$.

\begin{assumption}
\label{assum:known-phase}
    The ratio of reactive to active injections, $\kappa = q_i / p_i$, is the same for all nodes $i \in \setN$ in the network.
\end{assumption}
Physically, Assumption \ref{assum:known-phase} is equivalent to the statement that \textit{all power factors} are the same at every node in the network. Importantly, while Assumption \ref{assum:known-phase} aids in the forthcoming theoretical analyses, our numerical results in Section \ref{sec:numerical} empirically indicate that it may be unnecessary in practice. The algorithm appears to perform well in our experiments, even if Assumption \ref{assum:known-phase} is not satisfied.
\begin{lemma}
\label{lemma:system-laplacian}
    Given Assumption \ref{assum:known-phase}, there exists a Laplacian matrix $L = (R + X \mId \kappa)^{-1}$ and orthonormal eigenbasis $W$ such that $L := W \Lambda W^\T$, $\Lambda := \diag(\lambda_1,\dots,\lambda_n)$, and the  LinDistFlow model is then $v = v^{\bullet} \vone + L^\dagger p$ for any $p \in \R^n$, where $\c{\cdot}^{\dagger}$ denotes the Moore-Penrose pseudoinverse, and $v^\bullet$ is a nominal voltage, e.g., 1 per unit (pu).
\end{lemma}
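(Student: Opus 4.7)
The plan is to substitute the uniform power-factor hypothesis of Assumption~\ref{assum:known-phase} into the LinDistFlow model and then exploit the square-invertibility of the reduced incidence matrix $A$ (a structural consequence of radiality after dropping the slack column) to collapse $R$ and $X$ into a single effective operator whose inverse is manifestly a weighted Laplacian.

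First, by Assumption~\ref{assum:known-phase}, $q_i = \kappa p_i$ at every node, i.e., $q = \kappa p$, so substituting into $v = v^\bullet \vone + R p + X q$ immediately gives $v = v^\bullet \vone + (R + \kappa X) p$. The task then reduces to showing that $M := R + \kappa X$ is invertible and that $L := M^{-1}$ is a symmetric (reduced) Laplacian admitting an orthonormal eigenbasis. Since $A \in \{-1,0,1\}^{n \times n}$ is square and invertible, the factorizations $G = A^\T \diag(r)^{-1} A$ and $B = A^\T \diag(x)^{-1} A$ invert cleanly to $R = A^{-1} \diag(r) A^{-\T}$ and $X = A^{-1} \diag(x) A^{-\T}$, so
\[ R + \kappa X = A^{-1} \diag(r + \kappa x)\, A^{-\T}. \]
Because $r, x > 0$ componentwise (and $\kappa$ is physically sized so that $r + \kappa x > 0$), this matrix is symmetric positive definite, and inverting gives
\[ L := (R + \kappa X)^{-1} = A^\T \diag(r + \kappa x)^{-1} A, \]
which is a reduced weighted Laplacian with effective edge conductances $1/(r_e + \kappa x_e)$ --- the ``DC-equivalent'' operator obtained after collapsing reactive effects into resistive ones under a uniform power factor.

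Because $L$ is real symmetric and positive definite, the spectral theorem supplies an orthonormal $W$ and positive diagonal $\Lambda = \diag(\lambda_1, \dots, \lambda_n)$ with $L = W \Lambda W^\T$. Nonsingularity of $L$ then forces $L^\dagger = L^{-1} = R + \kappa X$, so substituting back yields $v = v^\bullet \vone + L^\dagger p$, as claimed.

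The only genuine subtlety is the algebraic collapse $R + \kappa X = A^{-1} \diag(r + \kappa x) A^{-\T}$, which rests entirely on $A$ being square --- a property that fails for meshed networks, where $R$ and $X$ would no longer share a common conjugation by $A^{-1}$ and their sum would not in general be the inverse of a weighted Laplacian. Radiality is doing real work here, not just playing a notational role; without it, the spectral reformulation underlying the forthcoming bandit algorithm would have to be reconstructed along a different route.
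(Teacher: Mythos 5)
Your proof is correct and follows essentially the same route as the paper's: substitute $q=\kappa p$ under Assumption~\ref{assum:known-phase}, use the square (radial) incidence matrix to collapse $R+\kappa X$ into $A^{-1}\diag(r+\kappa x)A^{-\T}$, and apply the spectral theorem to the resulting real symmetric matrix. The only deviation is your parenthetical claim that $r+\kappa x>0$ so that $L$ is positive definite --- the paper deliberately avoids this (symmetry alone is what it uses for the eigendecomposition, and possibly negative effective weights $r_{ij}+\kappa x_{ij}$, together with the question of $L\succeq 0$, are handled separately via Assumption~\ref{assum:effective-resistance-bound} and Lemma~\ref{lemma:psd-L}), so you should not build that positivity into this lemma; all you need for invertibility is $r_e+\kappa x_e\neq 0$ on every edge.
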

We prove the Lemma in Appendix~\ref{apen:lemma1}. The Laplacian eigenbasis $W$ allows us to construct the \emph{graph Fourier transform} of the nodal injections and voltages, which we write as $\rho := W^\T p$ and $\psi := W^\T v$, respectively. Substituting these transformations into the LinDistFlow model yields the following representation of the voltages
\begin{equation}
   v - v^\bullet \vone =  L^\dagger p = W \Lambda^{-1} W^\T p =  W \psi
\end{equation}
in the \emph{graph Fourier basis} $W$. In the following, the graph Fourier transform coefficients $\psi$ are the parameters we wish to estimate.

\subsection{Action set and reward formulation}
The proposed sensor sampling problem can be expressed naturally in the language of multi-armed bandits. At each measurement interval, or \emph{round} $t$, the learner selects a subset of no more than $b$ nodes to sample to identify possible voltage violations in the network. Thus, the set of all available sampling strategies is defined via the action set 
\begin{equation}
\label{eq:action-set}
    \setA = \c{ \setS \in 2^{\setN} \: : \: \abs{\setS} \leq b},
\end{equation}
of which there are $\abs{\setA} = {n \choose b}$ possible strategies. The action set is a subset of the $n$-dimensional hypercube: $\setA \subset \{0,1\}^n$. This is a classic example of a \emph{combinatorial bandit} problem; see \cite{cesa2012combinatorial,yue_linear_2011}, and \cite[Ch. 30]{lattimore_bandit_2020} for detailed discussions.

\begin{figure}
    \centering
    \includegraphics[width=0.9\linewidth,keepaspectratio]{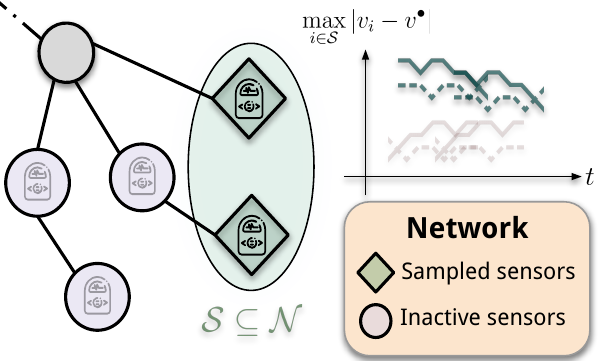}
    \caption{Illustration of the problem: Given an aggregate bandwidth limit across nodes, adaptively design sampling a policy $\setS\subseteq \setN$ to expose violations of voltage magnitude limits.}
    \label{fig:high-level-illustration}
\end{figure}

Often, engineers wish to use sensors in distribution systems to determine how far away nodal voltages are from their nominal values. 
Thus, an appropriate model for the effectiveness of a sampling strategy $\setS \in \setA$ is the \emph{maximum deviation} from the nominal value observed at any node $i \in \setS$. Concretely, after selecting sensors $\setS \in \setA$ to \emph{query}, the learner then observes a \textit{reward} $f : \setA \to \R$ in the form
\begin{equation}
    \label{eq:reward}
    f(\setS) = \max_{i \in \setS} \abs{v_{i} - v_i^\bullet} = \max_{i \in \setS} \abs{\ip{w_i}{\psi}},
\end{equation}
where $v_i$ and $v_i^\bullet$ are the sampled and nominal voltage at node $i \in \setS \subseteq \setN$, respectively. The voltages $v_i$ are random variables distributed as $v_i \sim \setD_i$, where $\setD_i$ is an arbitrary distribution for all nodes $i \in \setN$. Hereafter, the nominal voltage $v_i^\bullet$ will take the form of the expected value of that nodal quantity, conditioned on past observations, $v_i^\bullet = \Expec[]{v_i \vert \setF_{t-1}}$, where $\setF_{t-1}$ is a filtration, which is a collection of all past input-output pairs $(w_i,v_i)$ up to time $t$.

Note that the reward \eqref{eq:reward} is higher for sampling strategies that expose \emph{extreme deviations} of the nodal quantities from their nominal values. In addition, it has additional appealing properties, described in Lemma \ref{lemma:reward-properties}.
\begin{lemma}
    \label{lemma:reward-properties}
    The reward function $f : \setA \to \R$ is monotone submodular in $\setA$ and $1$-Lipschitz (see the proof in Appendix~\ref{apen:lemma2}).
\end{lemma}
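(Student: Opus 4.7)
The plan is to verify the three claims—monotonicity, submodularity, and the Lipschitz bound—in turn. Each reduces to a standard fact about the pointwise maximum of the non-negative scalars $a_i := \abs{\ip{w_i}{\psi}}$, $i \in \setN$, so I do not anticipate a serious technical obstacle; the only real care is in reading the Lipschitz claim unambiguously.

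For monotonicity, I would observe that if $\setS \subseteq \setT$, then $f(\setS) = \max_{i \in \setS} a_i \leq \max_{i \in \setT} a_i = f(\setT)$, since the maximum on the right is taken over a superset of non-negative quantities. For submodularity, I would compute the marginal gain explicitly: for any $\setS$ and $j \notin \setS$,
\begin{equation*}
f(\setS \cup \{j\}) - f(\setS) = \max\{f(\setS), a_j\} - f(\setS) = \max\{0,\, a_j - f(\setS)\}.
\end{equation*}
If $\setS \subseteq \setT$ with $j \notin \setT$, monotonicity forces $f(\setS) \leq f(\setT)$, hence $\max\{0, a_j - f(\setS)\} \geq \max\{0, a_j - f(\setT)\}$, which is the diminishing-returns characterization of submodularity.

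For the 1-Lipschitz claim, I would interpret it with respect to the parameter $\psi$, which is the natural choice since the subsequent concentration arguments act on that variable. Fixing $\setS$, write $f_\setS(\psi) := \max_{i \in \setS} \abs{\ip{w_i}{\psi}}$. Combining the elementary inequality $\abs{\max_i x_i - \max_i y_i} \leq \max_i \abs{x_i - y_i}$ with the reverse triangle inequality applied to the inner absolute values, and then Cauchy--Schwarz, yields
\begin{equation*}
    \abs{f_\setS(\psi) - f_\setS(\psi')} \;\leq\; \max_{i \in \setS} \norm{w_i}_2 \, \norm{\psi - \psi'}_2 \;=\; \norm{\psi - \psi'}_2,
\end{equation*}
where the final equality invokes Lemma~\ref{lemma:system-laplacian}: since $W$ is orthonormal, every row has unit Euclidean norm. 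The only obstacle I foresee is notational—$f$ is simultaneously a set function and a function of $\psi$—so the write-up should be explicit that the Lipschitz bound is taken at fixed $\setS$ in the $\psi$ argument; no new ideas are needed beyond the elementary inequalities above.
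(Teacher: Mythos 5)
Your proof is correct, but it departs from the paper's route in two places worth noting. For submodularity, the paper represents $f(\setS)$ as a masked $\ell_\infty$ norm, $f(\setS)=\norm{\diag\p{a(\setS)}\,\tilde v}_\infty$, and verifies the lattice inequality $f(\setS)+f(\setT)\geq f(\setS\cap\setT)+f(\setS\cup\setT)$ by case analysis (nested sets, then disjoint sets via the triangle inequality); you instead compute the marginal gain explicitly, $f(\setS\cup\{j\})-f(\setS)=\max\{0,\,a_j-f(\setS)\}$, and invoke the diminishing-returns characterization. Your argument is arguably the cleaner one: it handles every pair $\setS\subseteq\setT$, $j\notin\setT$ in one line, whereas the paper's casework only spells out the nested and disjoint configurations and leaves general overlapping pairs implicit. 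For the Lipschitz claim, the paper reads $f$ as a function of the centered voltage vector and gets the constant $1$ from the norm comparison $\norm{x}_\infty\leq\norm{x}_2$; you fix $\setS$ and work in the parameter $\psi$, using $\abs{\max_i x_i-\max_i y_i}\leq\max_i\abs{x_i-y_i}$, the reverse triangle inequality, Cauchy--Schwarz, and the unit-norm rows of $W$. These are equivalent statements, since $W$ is orthonormal and hence an isometry between $\psi$ and the centered voltages ($\norm{\psi-\psi'}_2=\norm{\tilde v-\tilde v'}_2$), so no gap arises; your phrasing has the mild advantage of bounding sensitivity in the very variable $\psi$ that the confidence-ellipsoid and regret arguments later estimate, while the paper's norm representation makes normalization ($f(\emptyset)=0$) and monotonicity immediate from the masking picture.
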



At each \emph{round} or time step, $t=1,\ldots,m$, the power injections $p_1,\ldots,p_t \in \R^n$ are a stochastic process. The learner then selects subsets of sensors $\setS_1,\ldots,\setS_t \subseteq \setN$ to \emph{query}. The \emph{regret} of the learner over $m$ time steps is
\begin{equation}
    \label{eq:regret}
    R_m := \Expec[]{\max_{\setS \in \setA} \sum_{t=1}^m f(\setS) - f(\setS_{t})},
\end{equation}
which the learner wishes to minimize; we, in turn, wish to bound it. The regret \eqref{eq:regret} models the cumulative differences between the maximal average perturbation around a nominal point and the maximal perturbations observed by our chosen sampling policy.

\section{Stochastic LinDistFlow model}
\label{sec:stochastic-lindistflow}
This section proposes a stochastic, distribution-agnostic LinDistFlow model. Given a mild assumption on power consumption in distribution grids, we use the structure provided by LinDistFlow to probabilistically model the extrema of voltage magnitude fluctuations about a nominal point.
Many of our results are based on sub-Gaussian concentration. We recall the definition of these random variables below.
\begin{definition}
    \label{def:sub-gaussian}
    A random variable $x \in \R$ is sub-Gaussian with parameter $\sigma$ if $\Expec[]{e^{\lambda \p{x - \Expec[]{x}}}} \leq e^{\lambda^2 \sigma^2/2}$ for all $\lambda \in \R$. 
\end{definition}
We first make a general statement on the fluctuation of random nodal voltages in the LinDistFlow model around their nominal values. Critically, this statement holds for a broad family of possible active power injection distributions\textemdash that is, any distribution with bounded support. Specific knowledge of power injection distributions is unnecessary.

We emphasize that this work does not assume specific distributions for any random quantity. Instead, a \emph{family} of possible distributions is considered. Concretely, for each sampling interval $t= 1,\ldots,m$, let $p_t := p_\star + \tilde{p}_t \in \R^n$  be the nodal injections, where $p_\star$ is a nominal value, and $\tilde{p}_t$ is a random change in the nodal injections drawn from an unknown distribution. The only assumption is that the \textit{range} of power consumption can be inferred, as we describe now.
\begin{assumption}
    \label{assum:reporting-scenarios}
    Assume that the power consumption at each node $i \in \setN$ is an independent variable whose range is a \textit{predictable process}, and therefore we can infer bounds on the injections $\{(\pmin_t,\pmax_t)\}_{t=1}^m$ from historical data such that $ \pmin_t \leq \tilde{p}_t \leq \pmax_t$ almost surely. 
\end{assumption}
Hereafter, denote $\Delta_t := \pmax_t - \pmin_t >0$. 

\subsection{Concentration of nodal voltage perturbations}

Using the structure of the LinDistFlow model, Assumption~\ref{assum:reporting-scenarios} immediately gives rise to concentration bounds for the voltages.  Moreover,  Assumption~\ref{assum:reporting-scenarios} relaxes the Gaussianity assumptions common in prior work (e.g. \cite{li2011phasor, kekatos2012optimal}), which we discuss in Appendix~\ref{apdx:proof:subgaussian-vmag}.  The following results are valid even if $(\pmin_t,\pmax_t)$ are the largest physically plausible injections, i.e., grid constraint bounds. Access to tighter bounds forecast by the sensors only serves to make the results more precise.
\begin{lemma}
\label{lemma:subgaussian-vmag}
     If Assumption \ref{assum:reporting-scenarios} holds, then for each $i \in \setN$ we have that $v_i - v^\bullet$ is a sub-Gaussian random variable with parameter $\frac{1}{2}\Delta \norm{\Lambda^{-1} w_i}_2$.
\end{lemma}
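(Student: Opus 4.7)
The plan is to rewrite the voltage deviation as a linear functional of independent, bounded, centered random injections and then combine Hoeffding's lemma with the closure of sub-Gaussianity under weighted sums of independents. The $\ell_2$-norm of the weight vector will collapse to the stated expression $\|\Lambda^{-1} w_i\|_2$ after exploiting the orthonormality of $W$.

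\textbf{Step 1 (linearize the perturbation).} Invoking Lemma~\ref{lemma:system-laplacian}, write $v - v^\bullet \vone = L^\dagger p$ with $L^\dagger = W \Lambda^\dagger W^\T$. Decompose $p = p_\star + \tilde{p}$ as in Assumption~\ref{assum:reporting-scenarios}, absorb the deterministic shift $e_i^\T L^\dagger \Expec[]{\tilde{p}}$ into $v_i^\bullet$, and set $\zeta := \tilde{p} - \Expec[]{\tilde{p}}$. Then
$$v_i - v_i^\bullet \;=\; e_i^\T L^\dagger \zeta \;=\; \bigl\langle \Lambda^\dagger w_i,\, W^\T \zeta\bigr\rangle \;=\; \sum_{j=1}^n (L^\dagger)_{ij}\, \zeta_j,$$
where $w_i := W^\T e_i$ as in the reward definition \eqref{eq:reward}.

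\textbf{Step 2 (coordinate-wise sub-Gaussianity).} Assumption~\ref{assum:reporting-scenarios} gives $\zeta_j \in [\pmin - \Expec[]{\tilde p_j},\, \pmax - \Expec[]{\tilde p_j}]$ almost surely, so each $\zeta_j$ is a mean-zero random variable supported on an interval of length at most $\Delta$. Hoeffding's lemma then yields that $\zeta_j$ is sub-Gaussian with parameter $\Delta/2$, and by Assumption~\ref{assum:reporting-scenarios} the family $\{\zeta_j\}_{j=1}^n$ is independent.

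\textbf{Step 3 (assemble and simplify).} By the standard MGF convolution for independent sub-Gaussians, for any $c \in \R^n$ the weighted sum $\langle c, \zeta\rangle$ is sub-Gaussian with parameter $\tfrac{\Delta}{2}\|c\|_2$. Taking $c = (L^\dagger)^\T e_i = L^\dagger e_i$ (by symmetry of $L^\dagger$), it remains to compute $\|L^\dagger e_i\|_2$. Since $W$ is orthogonal,
$$\|L^\dagger e_i\|_2 \;=\; \|W \Lambda^\dagger W^\T e_i\|_2 \;=\; \|\Lambda^\dagger W^\T e_i\|_2 \;=\; \|\Lambda^{-1} w_i\|_2,$$
where $\Lambda^{-1}$ is interpreted as the pseudoinverse on the zero eigencomponent of the Laplacian. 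This produces the claimed parameter $\tfrac{1}{2}\Delta \|\Lambda^{-1} w_i\|_2$.

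\textbf{Anticipated obstacle.} The step that requires the most care is the bookkeeping between the scalar nominal $v^\bullet$ of Lemma~\ref{lemma:system-laplacian} and the per-node conditional mean $v_i^\bullet = \Expec[]{v_i \mid \setF_{t-1}}$ used in the reward; one must argue that replacing $v^\bullet$ by $v_i^\bullet$ is exactly what removes the mean of $\tilde{p}$, so that the remaining noise $\zeta$ is genuinely centered before applying Hoeffding. The second subtlety is the zero eigenvalue of the Laplacian: the symbol $\Lambda^{-1}$ in the conclusion must be read as the Moore--Penrose pseudoinverse, consistent with the pseudoinverse used in the LinDistFlow representation, so that the factorization $L^\dagger = W \Lambda^\dagger W^\T$ and the resulting norm identity are well-posed.
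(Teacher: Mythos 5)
Your proposal is correct, and although it uses the same two ingredients as the paper (Hoeffding's lemma for the bounded, independent injections, followed by the MGF product rule for a weighted sum of independent sub-Gaussians), you run the argument in node coordinates rather than in the spectral domain, and that is a genuine difference worth noting. The paper expands $v_i - \Expec[]{v_i} = \sum_{j} W_{ij}\lambda_j^{-1}\rho_j$ with $\rho = W^\T p$ and factors the moment-generating function across the $\rho_j$, invoking ``independence of the injections''; strictly speaking the graph Fourier coefficients $\rho_j$ are linear mixtures of the nodal injections and are not independent even when the injections themselves are, so that factorization is only heuristic as written. You instead keep the genuinely independent centered coordinates $\zeta_j$, write $v_i - v_i^\bullet = \ip{L^\dagger e_i}{\zeta}$, apply Hoeffding coordinatewise to get parameter $\Delta/2$, and recover the identical constant through the rotation invariance $\norm{L^\dagger e_i}_2 = \norm{\Lambda^{-1} w_i}_2$ coming from orthonormality of $W$. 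In short, your route lands on the same sub-Gaussian parameter while giving a cleaner justification of the independence step; your bookkeeping of the centering (absorbing $e_i^\T L^\dagger \Expec[]{\tilde p}$ into $v_i^\bullet$) and your reading of $\Lambda^{-1}$ as the pseudoinverse on the Laplacian's null component are consistent with what the paper does implicitly, so no gap remains.
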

Lemma \ref{lemma:subgaussian-vmag}, proven in Appendix \ref{apdx:proof:subgaussian-vmag}, characterizes a family of possible distributions on the voltage magnitudes that arise from Assumption \ref{assum:reporting-scenarios}. Applying sub-Gaussian maxima concentration leads to our first primary result.

\begin{theorem}
\label{thm:max-v-concentration}
     Let $\setS \subseteq \setN$ be a sampling of $b$ nodes. Suppose that $\Delta_t := \Delta$ for all $t$, and suppose that LinDistFlow accurately represents the network model. If Assumption \ref{assum:reporting-scenarios} holds, we have 
    \begin{equation}
    \label{eq:l1:bounded-x-expec}
        \Expec[]{\max_{i \in \setS} \abs{v_i - v_i^\bullet}} \lesssim \frac{1}{2}\Delta \max_{i \in \setS} \norm{\Lambda^{-1} w_i}_2^2\sqrt{2\log(b)};
    \end{equation}
    moreover, for all $\epsilon > 0$
    \begin{equation}
    \label{eq:l1:bounded-x-prob}
        \Prob[]{\max_{i \in \setS} \abs{v_i -  v_i^\bullet} > \epsilon} \leq 2b \exp\c{ \frac{-2 \epsilon^2}{\Delta^2 \max\limits_{i \in \setS} \norm{\Lambda^{-1} w_i}_2^2 }  }.
    \end{equation}
\end{theorem}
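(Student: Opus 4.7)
The plan is to combine Lemma~\ref{lemma:subgaussian-vmag} with standard sub-Gaussian maximal concentration inequalities, applied to the $b$ voltage deviations indexed by the sample $\setS$. Once the nodewise sub-Gaussian parameter is in hand, both claims follow from essentially textbook arguments: a union bound with the Chernoff inequality for the tail, and the log-sum-exp / variational trick for the expectation.

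\textbf{Step 1: Uniformize the sub-Gaussian parameter over $\setS$.} First, I would invoke Lemma~\ref{lemma:subgaussian-vmag} together with the hypothesis $\Delta_t \equiv \Delta$ to conclude that, for every $i \in \setS$, the centered deviation $v_i - v_i^\bullet$ is sub-Gaussian (conditional on $\setF_{t-1}$) with parameter $\sigma_i := \tfrac{1}{2}\Delta \norm{\Lambda^{-1} w_i}_2$. Setting $\sigma_\star := \max_{i \in \setS} \sigma_i = \tfrac{1}{2}\Delta \max_{i \in \setS} \norm{\Lambda^{-1} w_i}_2$, every sampled $v_i - v_i^\bullet$ is then sub-Gaussian with the common parameter $\sigma_\star$.

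\textbf{Step 2: Tail bound via union bound.} I would then apply the two-sided sub-Gaussian tail $\Prob[]{\abs{v_i - v_i^\bullet} > \epsilon} \leq 2\exp(-\epsilon^2/(2\sigma_i^2))$ node-by-node and union bound over $\setS$, yielding $\Prob[]{\max_{i \in \setS} \abs{v_i - v_i^\bullet} > \epsilon} \leq 2b \exp(-\epsilon^2/(2\sigma_\star^2))$. Expanding $2\sigma_\star^2 = \tfrac{1}{2}\Delta^2 \max_{i \in \setS}\norm{\Lambda^{-1} w_i}_2^2$ places the exponent in the exact form stated in \eqref{eq:l1:bounded-x-prob}.

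\textbf{Step 3: Expectation bound via the exponential-moment trick.} For any $\lambda > 0$, Jensen's inequality and monotonicity of $\exp$ give $\exp(\lambda\,\Expec[]{\max_{i \in \setS}\abs{v_i - v_i^\bullet}}) \leq \sum_{i \in \setS} \Expec[]{\exp(\lambda\abs{v_i - v_i^\bullet})} \leq 2b \exp(\lambda^2 \sigma_\star^2/2)$ by the sub-Gaussian MGF bound (the factor of $2$ handles the absolute value). Taking logs, dividing by $\lambda$, and optimizing at $\lambda = \sqrt{2\log(2b)}/\sigma_\star$ yields $\Expec[]{\max_{i \in \setS}\abs{v_i - v_i^\bullet}} \leq \sigma_\star \sqrt{2\log(2b)} \lesssim \sigma_\star \sqrt{2\log b}$, absorbing the $2$ inside the logarithm into the $\lesssim$ constant.

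The proof is short and the main intellectual content is really carried by Lemma~\ref{lemma:subgaussian-vmag}; I expect two subtleties in writing it up. The first is ensuring that the conditional (on $\setF_{t-1}$) sub-Gaussianity aligns with the conditional centering $v_i^\bullet = \Expec[]{v_i \vert \setF_{t-1}}$, so the MGF bound applies with zero conditional mean and the outer expectation is taken after the argument. The second is cosmetic: the right-hand side of \eqref{eq:l1:bounded-x-expec} as written has $\norm{\Lambda^{-1} w_i}_2^2$, whereas the argument above produces $\norm{\Lambda^{-1} w_i}_2$ to the first power (consistent with the half-squared appearance inside the exponent of \eqref{eq:l1:bounded-x-prob}); I would flag this as a typographical exponent and present the first-power bound, which is what the sub-Gaussian parameter $\sigma_\star$ actually delivers.
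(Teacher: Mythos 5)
Your proposal is correct, and it coincides with the paper's proof for the tail bound: the paper likewise takes the two-sided sub-Gaussian tail from Lemma~\ref{lemma:subgaussian-vmag} (equation~\eqref{eq:concentration-vmag}), applies a union bound over the $b$ sampled nodes, and uniformizes the parameter by taking $\max_{i\in\setS}\norm{\Lambda^{-1}w_i}_2$, exactly as in your Steps 1--2. Where you diverge is the expectation bound: the paper uses the layer-cake identity $\Expec[]{\max_i\abs{\tilde v_i}}=\int_0^\infty \Prob[]{\max_i\abs{\tilde v_i}\geq\epsilon}\,\mathsf{d}\epsilon$, truncates at a level $c$, plugs in the union-bounded Gaussian tail, and optimizes $c$ (Leibniz rule), whereas you use the soft-max/MGF trick, bounding $\exp\p{\lambda\Expec[]{\max_i\abs{\tilde v_i}}}$ by $\sum_i\Expec[]{e^{\lambda\abs{\tilde v_i}}}\leq 2b\,e^{\lambda^2\sigma_\star^2/2}$ and optimizing $\lambda$. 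Both routes are standard, give the same $\sigma_\star\sqrt{2\log(2b)}$ with the same subsequent sharpening/absorption of the factor $2$ inside the logarithm (both degenerate equally at $b=1$), and, importantly, neither requires independence across nodes, which the paper explicitly emphasizes; your route does need the per-node MGF bound rather than only the tail, but that is exactly what the proof of Lemma~\ref{lemma:subgaussian-vmag} establishes in \eqref{eq:v_i-mgf-bound}, so there is no gap. Your flag of the exponent on $\norm{\Lambda^{-1}w_i}_2$ in \eqref{eq:l1:bounded-x-expec} is well taken: the paper's own optimization of the truncated integral also produces the first-power quantity $\tfrac12\Delta\max_{i\in\setS}\norm{\Lambda^{-1}w_i}_2\sqrt{2\log b}$, consistent with the exponent in \eqref{eq:l1:bounded-x-prob}, so the squared norm in the stated bound is indeed a typographical slip rather than something your argument fails to reach.
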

Theorem \ref{thm:max-v-concentration} is proven in Appendix \ref{apdx:proof:max-x-concentration}. There is an intuitive physical interpretation for these results. The results provide probabilistic predictions of the worst-case fluctuation of nodal voltages that would be observed by a sensor sampling strategy $\setS \subseteq \setN$; these predictions can inform how sensors should be selected. The result is achieved by virtue of the graphical structure of the LinDistFlow model, coupled with the mild Assumption \ref{assum:reporting-scenarios} on bounded load behavior.

\subsection{Condition for positive semi-definite Laplacian}
\label{sec:psd}

Regret guarantees for spectral bandits typically assume that the Laplacian $L$ is positive semidefinite. This is not guaranteed in general for distribution networks\textemdash the power factor ratio $\kappa$ or the reactances $x$ may be negative, potentially rendering $L$ indefinite \cite{zelazo_laplacian_definite_2014}. However, we can construct a condition using the physical structure of distribution networks to ensure that $L$ satisfies this requirement. This pursuit begins with an assumption that mirrors \cite{chen_negative_weights_2016}.
\begin{assumption}
    \label{assum:effective-resistance-bound}
    Let $\kappa \in \R$ as in Assumption \ref{assum:known-phase}, and let $\setE_+$ (resp. $\setE_- $) be the subsets of all lines $(i,j) \in \setE$ where $r_{ij} + \kappa x_{ij}$ is positive (resp. negative). Let $L_+$ be the Laplacian constructed from the subnetwork $(\setN,\setE_+)$. Assume that
    \begin{equation}
        \abs{r_{ij} + \kappa x_{ij}} \geq e_{ij}^\T L^\dagger_+ e_{ij} \quad \forall (i,j) \in \setE_-,
    \end{equation}
    where $e_{ij} := e_i - e_j$ is the difference between the $i$-th and $j$-th basis vectors and  $e_{ij}^\T L^\dagger_+ e_{ij}$ is the effective resistance between nodes $i,j \in \setN$.
\end{assumption}
Intuitively, Assumption \ref{assum:effective-resistance-bound} requires the effective impedance between any two points in the network to be non-negative. The above condition leads to the following result.
\begin{lemma}
    \label{lemma:psd-L}
    If and only if Assumption \ref{assum:effective-resistance-bound} holds, 
    \begin{equation}
        L \succeq 0.
    \end{equation}
\end{lemma}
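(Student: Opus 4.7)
The plan is to exploit the explicit Laplacian factorization of $L$ available for radial networks, decompose it into a signed sum of rank-one PSD contributions, and convert the PSD condition into an effective-resistance matrix inequality via a Schur complement.

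First, because the reduced incidence matrix $A$ of a radial network is square and invertible, Lemma~\ref{lemma:system-laplacian} yields the explicit form
\[
L = A^\T \diag(r+\kappa x)^{-1} A = \sum_{(i,j)\in \setE} \frac{1}{r_{ij}+\kappa x_{ij}}\, e_{ij}e_{ij}^\T.
\]
Splitting this sum by sign gives $L = L_+ - L_-'$, where $L_+$ is the PSD Laplacian of the positive subnetwork of Assumption~\ref{assum:effective-resistance-bound} (with weights $1/(r_{ij}+\kappa x_{ij})$) and $L_-' := \sum_{(i,j)\in \setE_-} |r_{ij}+\kappa x_{ij}|^{-1}\, e_{ij}e_{ij}^\T$ is a PSD Laplacian that collects the absolute values of the negative contributions.

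Second, I would factor $L_-' = M^\T D M$, where $M$ is the incidence matrix of the negative-edge subnetwork and $D := \diag(|r_{ij}+\kappa x_{ij}|^{-1})_{(i,j)\in \setE_-}$. A Schur complement argument on $\Range(L_+)$, together with the observation that Assumption~\ref{assum:effective-resistance-bound} implicitly forces the endpoints of each negative edge to be connected within the positive subnetwork (so that $\Range(M^\T) \subseteq \Range(L_+)$ and the effective resistances are finite), gives the equivalence
\[
L \succeq 0 \quad \Longleftrightarrow \quad D^{-1} \succeq M L_+^\dagger M^\T.
\]
Reading off the diagonal: the $(k,k)$ entry of $M L_+^\dagger M^\T$ is precisely the edge-wise effective resistance $e_{ij}^\T L_+^\dagger e_{ij}$, and the $(k,k)$ entry of $D^{-1}$ is $|r_{ij}+\kappa x_{ij}|$, so the pointwise bound in Assumption~\ref{assum:effective-resistance-bound} is immediately necessary. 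For sufficiency, I would invoke the effective-resistance characterization of signed Laplacians in \cite{chen_negative_weights_2016}, which lifts the edge-wise bound to the full matrix inequality using the electrical-network interpretation.

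The main obstacle will be the ``$\Leftarrow$'' direction when $|\setE_-|>1$: the pointwise bound is transparently necessary, but promoting it to the full matrix inequality $D^{-1} \succeq M L_+^\dagger M^\T$ also requires controlling the off-diagonal couplings $\langle e_{ij}, L_+^\dagger e_{kl}\rangle$ between distinct negative edges. I would handle this by exploiting the metric and sub-additive properties of the effective resistance on the connected positive subnetwork, together with the radial (tree) topology of distribution networks, to dominate the off-diagonal couplings by the diagonal effective resistances that already appear in Assumption~\ref{assum:effective-resistance-bound}.
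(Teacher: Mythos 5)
Your write-up attempts to reconstruct from scratch what the paper disposes of in one line by citing the cycle-free case of \cite{chen_negative_weights_2016}, and it breaks precisely at the two places you flag as ``to be handled.'' First, the equivalence $L \succeq 0 \Leftrightarrow D^{-1} \succeq M L_+^\dagger M^\T$ obtained from the generalized Schur complement is valid only in conjunction with the range condition $\Range(M^\T) \subseteq \Range(L_+)$, i.e., the endpoints of every negative edge must lie in the same component of $(\setN,\setE_+)$. You assert that Assumption~\ref{assum:effective-resistance-bound} ``implicitly forces'' this, but it does not: the quadratic form $e_{ij}^\T L_+^\dagger e_{ij}$ is a finite number whether or not $i$ and $j$ are connected in the positive subnetwork (the pseudoinverse never blows up), so the stated inequality can be satisfied while the range condition fails. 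Worse, in the radial setting of the lemma the connectivity you need is impossible: the unique $i$--$j$ path in a tree is the edge $(i,j)$ itself, so every negative edge has its endpoints in distinct components of $(\setN,\setE_+)$, the effective resistance across it is $+\infty$, and $L$ is automatically indefinite as soon as $\setE_- \neq \emptyset$ (choose $z \in \R^n$ constant on each component of $(\setN,\setE_+)$, zero on the component containing the slack bus, with $z_i \neq z_j$ for some negative edge $(i,j)$; then $z^\T L_+ z = 0$ while each negative edge contributes $(z_k - z_l)^2/(r_{kl}+\kappa x_{kl}) \leq 0$, strictly for $(i,j)$, so $z^\T L z < 0$). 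Hence your Schur-complement equivalence is unavailable in exactly the nontrivial cases it is supposed to decide.

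Second, your plan for sufficiency when $\abs{\setE_-} > 1$\textemdash dominating the off-diagonal entries of $M L_+^\dagger M^\T$ by the diagonal effective resistances via metric or subadditivity properties\textemdash is not a workable strategy: in general graphs the edge-wise condition is necessary but \emph{not} sufficient when several negative edges are present, and no triangle-inequality argument on effective resistance closes that gap. What actually makes the lemma true here is not control of off-diagonal couplings but the degeneracy above: on a tree, the assumption (read, as in \cite{chen_negative_weights_2016}, with the convention that the effective resistance of a disconnected pair is infinite) can hold only when no negative edge contradicts it, in which case $L = L_+ \succeq 0$ trivially, and conversely any negative edge destroys both the assumption and positive semidefiniteness. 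That dichotomy is exactly the no-cycle corollary of \cite[Thm.~3.2]{chen_negative_weights_2016} that the paper's proof invokes via radiality; your argument as written never uses radiality for this purpose, and its ``sufficiency'' step already defers to the same theorem, which makes the Schur-complement machinery redundant rather than a completed alternative proof.
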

\begin{proof}
    If the network lacks cycles, Assumption \ref{assum:effective-resistance-bound} is necessary and sufficient for the claim to hold; see \cite[As. 3.1]{chen_negative_weights_2016}. By radiality, a single-phase distribution network lacks cycles; the claim is then a corollary of \cite[Thm. 3.2]{chen_negative_weights_2016}.
\end{proof}

\subsection{Confidence ellipsoid, sampling policy, and regret}
\label{sec:regret_analysis}
We now provide a bound on the regret of the strategic sampling policy, when $v_t$ is generates as noisy observations of the form $(v_t)_i - v^\bullet_i = \ip{r_i + \kappa x_i}{p_\star} +\eta_t = \ip{w_i}{\psi_\star} +(\eta_t)_i$, where $(\eta_t)_i = \ip{r_i + \kappa x_i}{\tilde{p}_t}$.
\begin{theorem}
    \label{thm:regret}
    Let $V_t = \Lambda + \sum_{t=1}^m w_s w_s^\T$, then there exists a $\delta \in (0,1)$ and constants $c_t$ for all $t =1,\ldots,m$ such that for any $w \in \R^n$, with probability $1-\delta$:
    \begin{equation}
        \abs{w^\T\p{\hat{\psi}_t - \psi_\star}} \leq c_t \norm{w}_{V_{t-1}^{-1}}.
    \end{equation}
    Moreover, let $b=1$. If Assumptions \ref{assum:known-phase}, \ref{assum:reporting-scenarios}, and \ref{assum:effective-resistance-bound} hold, and $\abs{w_i^\T \psi_\star} \leq 1$ for all $i \in \setN$, the regret \eqref{eq:regret} of the sampler over $m$ periods is bounded as
    \begin{equation}
        R_m \leq \tilde{\setO}(d \sqrt{m}),
    \end{equation}
    where $d$ is the effective dimension, defined in \cite{valko_spectral_bandits_2014} as
    \begin{equation}
        \label{eq:eff-dim}
        d := \max_{i \in \setN}  \ i \quad \st \quad (i-1)\lambda_i \leq \frac{m}{\log(1+m/\lambda_1)}, 
    \end{equation}
    where $\lambda_1$ is the smallest eigenvalue of $L$.
\end{theorem}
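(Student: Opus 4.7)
The proof has two distinct pieces, a self-normalized confidence ellipsoid and a potential-function regret sum, which I would handle in sequence.

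For the confidence ellipsoid, the plan is to use the regularized least-squares estimator $\hat{\psi}_{t} := V_{t-1}^{-1} \sum_{s=1}^{t-1} w_{i_s} \p{(v_s)_{i_s} - v_{i_s}^\bullet}$, where $V_t = \Lambda + \sum_{s=1}^{t} w_{i_s} w_{i_s}^\T$ acts as a Laplacian-eigenvalue regularizer instead of the usual identity; Lemma~\ref{lemma:psd-L} is what lets us use $\Lambda \succeq 0$ legitimately. Observations take the form $(v_t)_i - v_i^\bullet = \ip{w_i}{\psi_\star} + (\eta_t)_i$, and by Lemma~\ref{lemma:subgaussian-vmag} together with Assumption~\ref{assum:reporting-scenarios} the noise $(\eta_t)_i$ is sub-Gaussian with parameter at most $\tfrac{1}{2} \Delta \max_{i \in \setN} \norm{\Lambda^{-1} w_i}_2$. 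This is the step that bridges the physical LinDistFlow modeling to the abstract bandit machinery. I would then invoke the self-normalized martingale concentration bound of Abbasi-Yadkori, P\'al, and Szepesv\'ari to obtain $\norm{\hat{\psi}_{t} - \psi_\star}_{V_{t-1}} \leq c_t$ with probability $1-\delta$, where $c_t$ aggregates the sub-Gaussian constant, $\log \det V_{t-1}$, a $\log(1/\delta)$ term, and $\norm{\psi_\star}_\Lambda$. The pointwise bound $\abs{w^\T (\hat{\psi}_{t} - \psi_\star)} \leq c_t \norm{w}_{V_{t-1}^{-1}}$ then follows from Cauchy--Schwarz in the $V_{t-1}$ inner product.

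For the regret bound I specialize to $b=1$, so the action set reduces to picking a single arm. Define the SpectralUCB policy $i_t := \arg\max_{i \in \setN} \c{ \ip{w_i}{\hat{\psi}_{t-1}} + c_t \norm{w_i}_{V_{t-1}^{-1}} }$. The standard optimism argument then gives instantaneous regret $r_t \leq 2 c_t \norm{w_{i_t}}_{V_{t-1}^{-1}}$: the UCB at the optimal arm upper-bounds its true reward, the chosen arm beats the optimal arm in UCB value, and the two-sided confidence width closes the gap. Using $\abs{w_i^\T \psi_\star} \leq 1$ I would truncate $r_t \leq 2\min\c{1, c_m \norm{w_{i_t}}_{V_{t-1}^{-1}}}$, then Cauchy--Schwarz in time gives $R_m \leq \sqrt{m \sum_{t=1}^m r_t^2}$.

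The final and most delicate step is to bound $\sum_{t=1}^m \norm{w_{i_t}}_{V_{t-1}^{-1}}^2$ by the effective dimension $d$ of \eqref{eq:eff-dim} rather than the ambient $n$. My plan is to apply the elliptical potential lemma, $\sum_{t=1}^m \log\p{1 + \norm{w_{i_t}}_{V_{t-1}^{-1}}^2} \leq \log\det(V_m \Lambda^{-1}) \leq \sum_{i=1}^n \log\p{1 + \tfrac{\lambda_i(V_m) - \lambda_i}{\lambda_i}}$, and then split the eigenvalue sum at the threshold $d$: for $i \leq d$ use a crude per-term bound, while for $i > d$ the definition of $d$ forces $(i-1)\lambda_i > m/\log(1+m/\lambda_1)$, so the tail contributes only a $\log$ factor. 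This is exactly the bookkeeping in \cite{valko_spectral_bandits_2014}. Combining the truncation, Cauchy--Schwarz, and potential bound produces $R_m \leq \tilde{\setO}(d \sqrt{m})$.

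The main obstacle I anticipate is twofold. First, the sub-Gaussian parameter of $(\eta_t)_i$ is node-dependent, so the self-normalized bound must either use the uniform upper bound $\tfrac{1}{2}\Delta \max_i \norm{\Lambda^{-1} w_i}_2$ (simpler but loose) or a coordinate-wise weighted martingale (tighter but notationally heavy); I would opt for the uniform bound so that $c_t$ remains a scalar. Second, positive semi-definiteness of the regularizer $\Lambda$ is required for $V_t^{-1}$ to behave well and for $\log\det(V_m \Lambda^{-1})$ to be nonnegative; Lemma~\ref{lemma:psd-L} under Assumption~\ref{assum:effective-resistance-bound} is precisely what makes this chain go through. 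Everything else reduces to bookkeeping that parallels the spectral-bandit literature.
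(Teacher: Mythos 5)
Your proposal follows essentially the same route as the paper's proof sketch: sub-Gaussian noise via Lemma~\ref{lemma:subgaussian-vmag} and Assumption~\ref{assum:reporting-scenarios}, positive semi-definiteness of the $\Lambda$-regularizer via Lemma~\ref{lemma:psd-L}, the Abbasi-Yadkori self-normalized confidence ellipsoid with Cauchy--Schwarz for the pointwise bound, and the Valko-style effective-dimension potential argument for the $\tilde{\setO}(d\sqrt{m})$ regret (the paper simply cites \cite{valko_spectral_bandits_2014} for the bookkeeping you spell out). The only detail you leave implicit is that $\norm{\psi_\star}_\Lambda$ must be shown finite --- the paper derives $\norm{\psi_\star}_\Lambda \leq C$ explicitly from the bounded injections of Assumption~\ref{assum:reporting-scenarios} --- but this is a minor omission, not a gap in the approach.
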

The result in Theorem \ref{thm:regret} can essentially be understood as stating that the regret of the proposed sampling algorithm grows sublinearly, with a scaling factor \eqref{eq:eff-dim} that is at most the number of nodes, $d \leq n$. 

Hereafter, for a sampling strategy $\setS \subseteq \setN$, we let $W_{\setS} \in \R^{b \times n}$ denote the submatrix of the graph Fourier basis $W$ with rows corresponding to $\setS$. It is straightforward to combine \cite[Thm. 1]{yue_linear_2011} by Lemma \ref{lemma:reward-properties}, while using the regularization framework of \cite{valko_spectral_bandits_2014} to produce a $\tilde{\setO}(d\sqrt{bm})$ bound. We leave the completion of this proof to future work.

\section{Sensor Sampling Algorithm}
\label{sec:sensor-sampling-algorithm}

The greedy algorithm in \cite{Nemhauser1978AnAO} approximately maximizes a normalized, monotone, submodular objective function $F:2^{\setN} \mapsto \R$ subject to cardinality constraints. In particular, the algorithm solves the program $\max_{\setS \in 2^{\setN}}  F(\setS)$, subject to $\abs{\setS} \leq b$, within a tolerance of $(1 - \frac{1}{e})$ of the global solution.


This algorithm will be integrated into our bandit framework; see \cite{yue_linear_2011} for an explanation on how the integration of this step into the algorithm affects the regret bound.


\subsection{Optimism principle}
To construct our proposed algorithm, we first define an optimistic upper bound on the reward function $\tilde{f} : \setA  \to \R$. This reward leverages the confidence ellipsoid developed in Section \ref{sec:regret_analysis}; it takes the form
\begin{equation}
    \tilde{f}(\setS) = \max_{i \in \setS} \abs{w_i^\T \psi - v^\bullet_i} + c \norm{w_i}_{V_{t}^{-1}},
\end{equation}
where $c$ is the exploration parameter.
\begin{algorithm}[tb]
\caption{\textsc{ExtremeSpectralSampler}}
\label{alg:GroupSpectralUCB}
\begin{algorithmic}[1]
\Require 

$\{n,m,b\}$: \# nodes, pulls, and samples/pull,

$\{\Lambda,W\}$: spectral basis of $L$,

$\{\lambda,\delta\}$: regularization and confidence parameters,

$\{\pmin,\pmax\}$: report bounds on nodal injections,



\State $\Delta \gets \pmax - \pmin$
\State $d \gets \eqref{eq:eff-dim}$
\For{$t=1,\dots,T$}
    \State $\setS_0 \gets \varnothing$, 
    \State $i \gets 1$
    \While{$i \leq b$} \Comment{Choose $b$ nodes ($b$ rows of $W$)}
        \State $k_i \gets \argmax_{j \in \setN} \tilde{f}(\setS_{i-1}\cup\{j\}) - \tilde{f}(\setS_{i-1})$
        \State $\setS_i \gets \setS_{i-1} \cup \{k_i\}$
        \State $i \gets i +1$
    \EndWhile
    \State Observe noisy voltages $v_t \gets v^\bullet + W_{\setS} \psi + \eta_t$ 
    \State Observe noisy reward $f_t \gets \max_{i \in \setS_t} \abs{w_i^\T \psi_\star + \eta_t}$
    \State $V_{t+1} \gets  V_t + W_{\setS}^\T W_{\setS} $ 
    \Comment{Update Fourier coefficients}
    \State $\hat{\psi}_{t+1} \gets V_{t+1}^{-1} \sum_{s=1}^t W_{\setS}^\T (v_t-v^\bullet)$ 

\EndFor
\end{algorithmic}
\end{algorithm}
In Algorithm \ref{alg:GroupSpectralUCB}, $W_{\setS} \in \R^{b \times n}$ is the submatrix formed from the $b$ rows of the basis $W$ indexed by $\setS$.

\subsection{Spectral UCB}
Online decision-making problems in which rewards are generated from an underlying graphical model can be solved with the spectral bandit framework \cite{valko_spectral_bandits_2014,kocak_spectral_2020}. These algorithms recursively solve a regularized least squares program
\begin{equation}
\label{eq:spectral-ucb}
    \hat{\psi}_t = \argmin_{\psi \in \R^n} \sum_{s=1}^{t-1} \p{ v_s - \ip{w_{s}}{\psi}}^2 + \beta \norm{\psi}_\Lambda^2,
\end{equation}
where $w_{s} \in \R^n$ is the row of the graph Fourier basis $W$ selected at time step $s$, which in our case, corresponds to having sampled the sensor at the node $i_s \in \setN$. The regularization parameter $\beta >0$ is chosen by the user; a particular choice of $\beta$ that depends on the \emph{effective dimension} of the Laplacian allows the program \eqref{eq:spectral-ucb} to achieve superior regret in comparison to standard least squares \cite[Thm. 8]{kocak_spectral_2020}. The regularizer $\norm{\cdot}_{\Lambda}$ is the following norm defined in terms of the graph Laplacian:
\begin{equation}
\label{eq:spectral-regularization}
    \norm{\psi}_\Lambda := \sqrt{\psi^\T \Lambda \psi} = \sqrt{p^\T W \Lambda^{-1} Wp} = \sqrt{p^\T L^{-1} p}.
\end{equation}
The regularizer \eqref{eq:spectral-regularization} promotes spatially smooth voltage predictions. Specifically, this regularizer promotes solutions where predicted voltages are similar at well-connected nodes; see \cite{valko_spectral_bandits_2014} for further details. The program \eqref{eq:spectral-ucb} has a closed form solution at each time $t$:
\begin{equation}
    \hat{\psi}_t = \p{ \sum_{s=1}^{t-1} w_s w_s^\T + \beta \Lambda  }^{-1} \p{ \sum_{s=1}^{t-1} w_s v_s}.
\end{equation}

\section{Numerical Experiments}
\label{sec:numerical}

Experimentally, we demonstrate the proposed method on the \texttt{case33bw} network \cite{baran_network_1989}. A demonstration of the online sampling algorithm is shown in Fig. \ref{fig:maximal_reward_regret} for the simple case where all power factors are unity ($\alpha_i =1$ for each node $i \in \setN$). The results with the spectral regularizer \eqref{eq:spectral-regularization} in the left subfigures are juxtaposed with the conventional LinUCB algorithm\textemdash which uses $\ell_2$-norm regularization\textemdash shown in the rightmost subfigures. As expected, a sublinear growth in regret is observed for both, and the growth rate is markedly improved for the SpectralUCB algorithm compared to the LinUCB algorithm.
\begin{figure}[!t]
    \centering
    \includegraphics[width=0.98\linewidth,keepaspectratio]{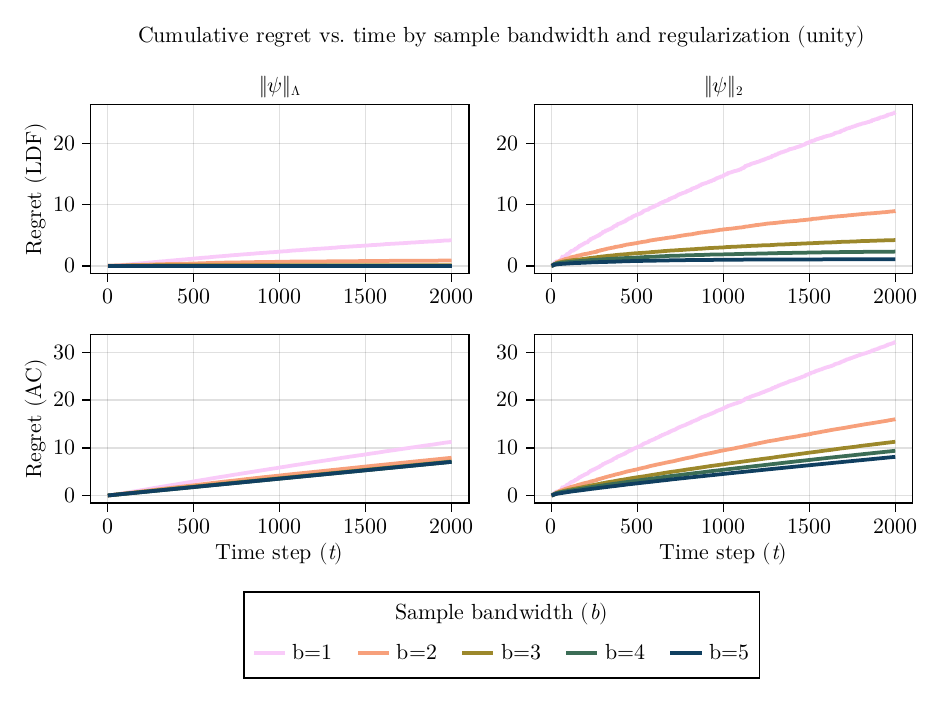}
    \caption{Performance with unity power factor at all nodes. The plots show regret relative to the LinDistFlow and AC optimal sampling vs. time with spectral (left) and $\ell_2$ (right) regularization.}
    \label{fig:maximal_reward_regret}
    \vspace{-1em} 
\end{figure}
\begin{figure}[!t]
\vspace{-0.2em}
    \centering
    \includegraphics[width=0.98\linewidth,keepaspectratio]{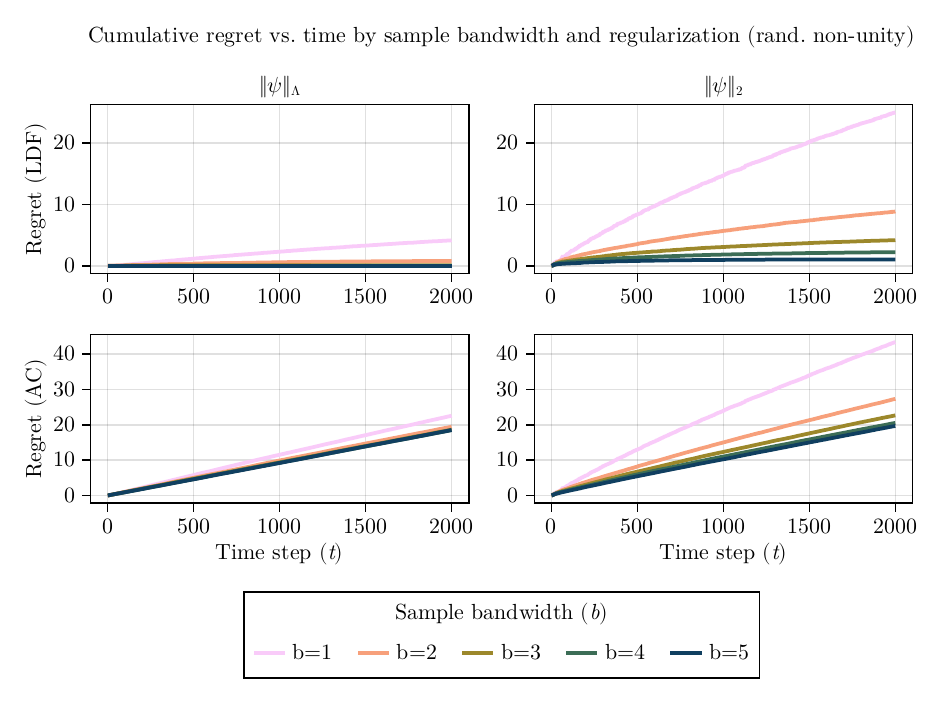}
    \caption{Performance with random, non-unity power factors. The plots show regret relative to the LinDistFlow and AC optimal sampling vs. time with spectral (left) and $\ell_2$ (right) regularization.}
    \label{fig:maximal_reward_regret_nonunity}
    \vspace{-1em}
\end{figure}

In the next experiment, we do not assume unity power factors. Instead, the reactive power control parameters vary for each node. We generate these control parameters randomly as fixed power factors bounded between 0.90 and 1, that is, $0 \leq \kappa \leq 0.48$. Similarly, the signs of the reactive power injections are generated according to a Rademacher distribution, i.e., $\sgn(\kappa) = \pm 1$ equiprobably. These parameters allow us to construct reactive power injections for any fixed active power injection, as we discuss in Appendix \ref{apen:lemma1}. Similarly to the finding in Fig.~\ref{fig:maximal_reward_regret}, the result shown in Fig.~\ref{fig:maximal_reward_regret_nonunity} illustrates a significant reduction in cumulative regret using the SpectralUCB algorithm compared to the LinUCB algorithm.

Finally, we compute an ``AC regret" quantity. Specifically, over the entire time horizon  $t=1,\ldots,m$, we precompute voltage magnitudes corresponding to the true AC power flow solution. This differs from the previous experiments depicted in the top subfigures of Figs. \ref{fig:maximal_reward_regret} and \ref{fig:maximal_reward_regret_nonunity}, where the LinDistFlow approximation of the voltage magnitudes serves as the ground truth. Instead, we compute the regret \eqref{eq:regret} as $R_t = \Expec[]{\sum_{\tau=1}^t f(\setS_{\sf ac}) - f(\setS_t)}$, where $\setS_{\sf ac}$ is the optimal clairvoyant sampling strategy given from the AC solutions. The AC regret results shown in the bottom subfigures of Fig.~\ref{fig:maximal_reward_regret} and Fig.~\ref{fig:maximal_reward_regret_nonunity} similarly show a reduction of the growth rate in regret when using the SpectralUCB algorithm.
\section{Discussion and Conclusion}
This work proposed a method to selectively sample streaming sensors in distribution networks according to a security criterion\textemdash the maximal voltage fluctuations around a nominal value. The key idea is to embed the graphical structure of distribution networks into a spectral bandit algorithm, promoting an electrically dispersed sampling policy. 

In contrast with previous work that relies on Gaussian assumptions, we used a distribution-agnostic framework to develop a stochastic LinDistFlow model. This enabled us to develop a principled model for the perturbations of nodal voltages around an operating point without knowledge of the specific distribution driving network loads.

Although the greedy submodular maximization approach we used is well known for solving optimization problems over sets\textemdash including bandit problems\textemdash it has limitations. Chief among these limitations are the limited convergence guarantees and its inherent approximation error in the objective function. Future work by the authors will explore more modern extreme bandit frameworks, such as the inverse gap weighting approach \cite{sen_topk_2021}.

Critically, the strongest theoretical assumption we made to simplify our analysis, Assumption \ref{assum:known-phase}, did not appear to significantly affect the numerical experiments in Section \ref{sec:numerical}. Moreover, while we have presently used synthetic test data, ongoing work by the authors is developing a testbed to investigate the performance of the proposed algorithms on actual networks.

\section*{Acknowledgement}
The authors gratefully acknowledge D. Rikken at Hubbell, Inc. for insightful discussions on communication network capabilities for distribution system sensors. The authors also gratefully acknowledge J. Chen, A. Rangarajan, P. Aquino de Alcantara, L. Roald, F. B. Costa, and D. Fuhrmann for stimulating group conversations.
\bibliographystyle{ieeetr}
\bibliography{refs}

\appendix
\section{Proofs of Lemmas}
\subsection{Proof of Lemma \ref{lemma:system-laplacian}}
\label{apen:lemma1}
\begin{proof}
    First, note that for any fixed active injection $p$ and known reactive settings $\alpha \in (0,1]^n$ and $\sgn(q) \in \p{0,\pm 1}^n$, we can express the corresponding reactive injection $q \in \R^n$ as $q := Kp$, where $K$ is a diagonal matrix with entries that take the form $K_{ii} = \sgn(q_i) \alpha_i^{-1}\sqrt{1-\alpha_i^2}$ for all $i \in \setN$. Then, note that
    \begin{subequations}
        \begin{align}
        v &= v^\bullet \vone + Rp + Xq = v^\bullet \vone + (R + XK)p\\
        &= v^\bullet \vone + A^{-1} \diag(r + \kappa I x) A^{- \T} p.
        \end{align}
    \end{subequations}
    Now let $L^\dagger := R + XK$, where $R,X$ are inverses of Laplacian matrices. Under Assumption \ref{assum:known-phase}, $K_{ii} := \kappa$ for all $i\in\setN$, hence $L^\dagger$ is symmetric, as $L^{-\T} = (R^\T + K^\T X^\T) = (R + \kappa I X) = L^\dagger$. As $L^\dagger$ is real and symmetric, it has an orthonormal eigenbasis $W \in \R^{n \times n}$ such that $L^\dagger := W \Lambda^{-1} W^\T$, where $\Lambda^{-1} = \diag(\lambda_1^{-1},\dots,\lambda_n^{-1})$ is a diagonal matrix of the reciprocals of the Laplacian eigenvalues.
\end{proof}
    
%
\vspace{-0.5em}
\subsection{Proof of Lemma \ref{lemma:reward-properties}}
\label{apen:lemma2}
\newcommand{\frakS}{\mathfrak{S}}
\begin{proof}
    Let $a : \setA \to \{0,1\}^n$ map a sampling strategy $\setS \in \setA$ to a binary vector whose entries take $a_i = 1$ if $i \in \setS$ and $0$ otherwise. Notice that we can equivalently write $f : \setA \to \R$ for any $\setS \in \setA$ as
    \begin{equation}
        f(\setS) = \norm{\diag\p{ a\p{\setS} } L^\dagger (y - y_i^\bullet) }_\infty,
    \end{equation}
    where $\norm{\cdot}_\infty : \R^n \to \R_+$ is the $\ell_\infty$ norm. Now, recall that any norm $\norm{\cdot} : \R^n \to \R_+$ is $L$-Lipschitz, where $L$ is the smallest $L >0$ that satisfies $\norm{x} \leq L \norm{x}_2$ for any $x \in \R^n$. By norm equivalence, we have $\norm{x}_\infty \leq \norm{x}_2$ for any $x \in \R^n$. We conclude that $f$ is $1$-Lipschitz.

    Monotonicity clearly holds; for all $\setS,\frakS \in 2^{\setN}$ such that $\setS \subseteq \frakS$, we have $f(\setS)  \leq f(\frakS)$. Furthermore, normalization also follows clearly as $f(\emptyset) = 0$. 
    
    Let the centered voltages be $\tilde{v}_i := v_i - \Expec[]{v_i}$ for all $i \in \setN$. To establish submodularity, we want to show that for all $\setS \subseteq \mathfrak{S} \in 2^{\setN}$ that
    \begin{equation}
            \label{eq:def-submodularity}
            f(\setS) + f(\frakS) \geq f(\setS \cap \frakS) + f(\setS \cup \frakS).
        \end{equation}
        Using the mapping $a : \setA \to \c{0,1}^n$ as before, we obtain
        \begin{equation}
            \begin{split}
                f(\setS \cap \frakS) &+ f(\setS \cup \frakS) = \norm{\p{a(\setS)  \circ a(\frakS) } \circ \tilde{v} }_\infty\\
                &+ \norm{\p{ a(\setS) + a(\frakS) - a(\setS) \circ a(\frakS) } \circ \tilde{v}}_\infty. 
            \end{split}
        \end{equation}
        We complete the proof in cases. First, suppose that $\setS \subseteq \frakS$; this means $a(\setS) \circ a(\frakS) \leq a(\setS)$, where the inequality is elementwise. The inequality is legal because, for all $i \in \frakS \setminus \setS$, we have $a(\setS)_i a(\frakS)_i = 0$. Hence,
        \begin{equation}
        \label{eq:l1:submodular-bound1}
             \begin{split}
             \!\!\!\!f(\setS \cap \frakS) + f(\setS \cup \frakS) &\leq \norm{a(\setS) \circ \tilde{v} }_\infty + \norm{ a(\frakS) \circ \tilde{v}}_\infty\\
             &:= f(\setS) + f(\frakS).
             \end{split}
        \end{equation}
        For the second case, take $\frakS \subseteq \setS$, and note that we can equivalently have $a (\setS) \circ a(\frakS) \leq a(\frakS)$ and thereby produce the same bound \eqref{eq:l1:submodular-bound1}. For the final case where $\frakS \cap \setS = \emptyset$, by applying the triangle inequality, we obtain
        \begin{equation}
        \begin{split}
            f(\setS \cap \frakS) + &f(\setS \cup \frakS) = \norm{\p{ f(\frakS) + f(\setS) } \tilde{v}}_\infty\\
            &\leq \norm{f(\frakS) \circ \tilde{v}}_\infty + \norm{f(\setS) \circ \tilde{v}}_\infty\\
            &:= f(\setS) + f(\frakS).
        \end{split}
        \end{equation}
        So, we conclude that \eqref{eq:def-submodularity} holds for all $\setS,\frakS \subseteq 2^{\setN}$, and therefore, the reward $f$ is submodular.
\end{proof}
\subsection{Proof of Lemma \ref{lemma:subgaussian-vmag} (sub-Gaussianity of nodal voltages)}
\label{apdx:proof:subgaussian-vmag}
\begin{proof}
    Assumption \ref{assum:reporting-scenarios} implies that $p$ is a vector of sub-Gaussian variables with parameter $\frac{1}{2} \Delta$ \cite[Ex. 2.4]{wainwright_2019}. 
    For each node $i \in \setN$ and any $s \in \R$, the moment-generating function of fluctuations in $v_i$ 
    is conditionally bounded as
    \begin{subequations}
    \label{eq:v_i-mgf-bound}
        \begin{align}
            \Expec[]{e^{s\p{v_i - \Expec[]{v_i}}}} 
            &= \Expec[]{\prod_{j=1}^n e^{s  W_{ij}\lambda_j^{-1}\rho_j }}\\
            &\overset{(1)}{=} \prod_{j=1}^n \Expec[]{e^{s W_{ij}\lambda_j^{-1} \rho_j }}\\
            &\overset{(2)}{\leq} \prod_{j=1}^n e^{\frac{1}{8} s^2  W_{ij}^2 \lambda_j^{-2} \Delta^2} \\
            &= e^{\frac{1}{8} s^2 \p{ \sum_{j=1}^n W_{ij}^2 \lambda_j^{-2} \Delta^2 }}.
        \end{align}
    \end{subequations}
    In the above display, step (1) is by the assumption of independence of the injections, and step (2) is by sub-Gaussianity of $\rho$ with parameter $\frac{1}{2} \Delta$. Thus, by Definition \ref{def:sub-gaussian} we see that $v_i$ is sub-Gaussian with parameter $\frac{1}{2}\Delta \norm{\Lambda^{-1} w_i}_2$, where $w_i$ is the $i$-th row of the eigenbasis of the Laplacian. 
    
    Next, we apply the Cram\'er-Chernoff bound \cite{wainwright_2019,lattimore_bandit_2020}. For any $s>0$, by Markov's inequality we obtain 
    \begin{equation}
    \label{eq:v_i-chernoff}
    \begin{split}
        \Prob[]{v_i -\Expec[t]{v_i} > \epsilon} &\leq \frac{  \Expec[]{e^{s \p{ v_i -\Expec[t]{v_i} } } }  }{s\epsilon}\\
        &\overset{\eqref{eq:v_i-mgf-bound}}{\leq} e^{ \frac{1}{8} s^2 \Delta^2 \norm{\Lambda^{-1} w_i}_2^2  -s \epsilon }.
    \end{split}    
    \end{equation}
    To make the upper bound \eqref{eq:v_i-chernoff} as small as possible, we minimize the exponent with respect to $s>0$, which yields 
    \[
    \inf_{s>0} \ \frac{1}{8} s^2 \Delta^2 \norm{\Lambda^{-1} w_i}_2^2  -s \epsilon = \frac{-2\epsilon^2}{\Delta^2 \norm{\Lambda^{-1} w_i}_2^2}.
    \]
    By including both tails in the bound, for all $i \in \setN$ we have 
    \begin{equation}
    \label{eq:concentration-vmag}
        \Prob[]{\abs{v_i - \Expec[]{v_i}} > \epsilon} \leq 2 \exp\c{ - \frac{2 \epsilon^2}{\Delta^2 \norm{\Lambda^{-1} w_i}_2^2 }}.
    \end{equation}
\end{proof}
%
%
\vspace{-1em}
\subsection{Proof of Theorem \ref{thm:max-v-concentration}}
\label{apdx:proof:max-x-concentration}
We emphasize that the following proof does not rely on independence of the voltage magnitudes. 

\begin{proof}
    To simplify notation, define the centered nodal voltages 
    $\tilde{v}_i := v_i - \Expec[]{v_i}$.
    Applying the union bound yields
    \begin{subequations}
    \label{eq:max-voltage-tail-bound}
        \begin{align}
           &\Prob[]{\max_{i \in \setS} \abs{\tilde{v}_i  } > \epsilon } = \Prob[]{\bigcup_{i \in \setS} \abs{\tilde{v}_i} > \epsilon }\\
            &\qquad \leq \sum_{i \in \setS} \Prob[]{ \abs{\tilde{v}_i} > \epsilon }\\
            &\qquad \overset{\eqref{eq:concentration-vmag}}{\leq} 2b \exp\c{ \frac{-2 \epsilon^2}{\Delta^2 \max\limits_{i \in \setS} \norm{\Lambda^{-1} w_i}_2^2 }  }.
        \end{align}
    \end{subequations}
    Therefore, in expectation, we have
    \begin{subequations}
        \label{eq:max-voltage-expectation-bound}
        \begin{align}
            \Expec[]{\max_{i \in \setS} \abs{\tilde{v}_i}} 
            &:= \int_{\epsilon=0}^\infty\Prob[]{\max_{i \in \setS} \abs{\tilde{v}_i  } \geq \epsilon} \mathsf{d\epsilon} \\
            &\hspace{-3.25em}\overset{(1)}{\leq} c + \int_{\epsilon=c}^\infty \sum_{i \in \setS} \Prob[]{\abs{\tilde{v}_i} > \epsilon }\mathsf{d\epsilon}\\
            &\hspace{-3.25em}\overset{(2)}{\leq} c + \int_{\epsilon=c}^\infty 2b \exp\c{ \frac{-2\epsilon^2}{\Delta^2 \max\limits_{i \in \setS} \norm{\Lambda^{-1} w_i}_2^2}}\mathsf{d\epsilon},
        \end{align}
    \end{subequations}
    where step (1) follows from the union bound and the fact that the probability of any event is upper bounded by $1$, combined with the fact that $\int_0^c \Prob[]{\max_{i \in \setS} \abs{\tilde{v}_i} \geq \epsilon}\mathsf{d\epsilon} \leq c$, and step (2) follows from the tail bound \eqref{eq:max-voltage-tail-bound}. Differentiating the upper bound with Leibniz's rule and solving for the minimizing $c$ yields the desired result \eqref{eq:l1:bounded-x-expec} with a factor of $\sqrt{\log(2b)}$, which can be sharpened to $\sqrt{\log(b)}$, as discussed in \cite[2.5.10]{Vershynin_2018}.
    \end{proof}
   %


\vspace{-1em}
\subsection{Proof sketch for Theorem \ref{thm:regret}}

The following proof for the confidence ellipsoid is a straightforward update to \cite[Lemma 20]{kocak_spectral_2020}. 

\begin{proof}
First, we show that there exists a $C >0$ such that $\norm{\psi_\star}_{\Lambda} \leq C$. Let $v_t := v^\bullet + W \psi_* + \eta_t$, where $\eta_t := W\Lambda^{-1} W^\T p_t$ and $\psi_\star := \Lambda^{-1}W^\T p_\star$. Observe that
\begin{align*}
\norm{\psi^*}_{\Lambda} &= \sqrt{\tilde{v}^\T W \Lambda W^\T \tilde{v}}\\
&\overset{(1)}{=}\sqrt{\sum_{k=1}^n \lambda_k^{-1} \abs{\ip{p_\star}{w_k}}^2}\\
&\overset{(2)}{\leq} \norm{p}_2\sqrt{\sum_{k=1}^n \lambda_k^{-1}}\\
&\overset{(3)}{\leq} \norm{\Delta}_2\sqrt{\trace{R+XK}}:=C >0,
\end{align*}
where step (1) is by definition, $v:=W\Lambda^{-1}W^\T p$, step (2) is by Cauchy-Schwarz and the fact that $\norm{w_k}_2 = 1$ for all $k$, step (3) is by the boundedness of $p$ and the definition of $\trace{\cdot}$, and $\lambda_k$ is the $k$-th eigenvalue of $L$.

Furthermore, note that $p_t \in [\pmin_t,\pmax_t]^n$ a.s., then by Lemma~\ref{lemma:subgaussian-vmag} we obtain that for all $i=1,\ldots,n$ the noise $(\eta_t)_i$ is sub-Gaussian with parameter at most
\begin{align*}
   \sigma_i &\leq \frac{1}{2}\Delta \norm{\Lambda^{-1} w_i}_2\\
    &\overset{(1)}{\leq} \frac{1}{2}\Delta \max_{w : \norm{w}_2 \leq 1} \norm{\Lambda^{-1} w}_2\\
    &\overset{(2)}{=} \frac{1}{2}\Delta \norm{\Lambda^{-1}}_{\sf op}
    \overset{(3)}{=} \frac{1}{2}\Delta \norm{R + XK}_{\sf op}:=\sigma_\star, 
\end{align*}
where step (1) is true because $\norm{w_i}_2 \leq 1$ for all nodes $i$, step (2) is by definition of the operator norm, and step (3) is true because $R + XK \succeq 0$ by Assumption \ref{assum:known-phase}, hence, $\norm{R+XK}_{\sf op} := \sigma_{\sf max}(L^\dagger) = (\sigma_{\sf min}(L))^{-1} = \norm{\Lambda^{-1}}_{\sf op} $, where $\sigma_{\sf max}(\cdot)$ (resp. $\sigma_{\sf min}(\cdot)$) denotes the maximum (resp. minimum) singular value.

As $b=1$, redefine $V_t := \Lambda + \sum_{s=1}^{t-1} w_s w_s^\T$ and let $\xi_t := \sum_{\tau=1}^{t-1} \eta_\tau w_\tau$, where the $\eta_\tau$'s are independent sub-Gaussian random variables with parameter $\sigma_\star$. We have that for any node $i \in \setN$,  the voltage prediction error at time $t$ satisfies 
\begin{subequations}
    \begin{align}
        &\abs{\ip{w}{\hat{\psi} - \psi^*}} =\abs{\ip{w}{ -V_t^{-1} \Lambda^{-1} \psi^* + V_t^{-1} \xi_t }}\\
        &\overset{(1)}{\leq} \abs{\ip{V_t^{-\frac{1}{2}} w}{V_t^{-\frac{1}{2}} \Lambda \psi^* }} + \abs{\ip{V_t^{-\frac{1}{2}} w}{V_t^{-\frac{1}{2}} \xi_t}}\\
        &\overset{(2)}{\leq} \norm{w}_{V_t^{-1}} \p{\norm{\xi_t}_{V_t^{-1}} + \norm{\Lambda^{-1}\psi^*}_{V_t^{-1}}},
    \end{align}
\end{subequations}
where steps (1) and (2) are by the triangle inequality and Cauchy-Schwarz, respectively.

By Assumption \ref{assum:effective-resistance-bound} and Lemma \ref{lemma:psd-L}, the initial gram matrix is positive definite, $V_0 := \lambda I + \Lambda \succ 0$. We need the following result, which appears in  \cite[Lemma 11]{abbasi-yadkori_improved_2011}, \cite[Lemma 19]{kocak_spectral_2020}, and \cite[Lemma 19.4]{lattimore_bandit_2020}.
\begin{lemma}
    \label{lemma:max-1-action}
    Let $\c{w_t}_{t=1}^m$ be a sequence of rows of $W$ chosen for the sampling strategy. Then, 
    \begin{equation}
        \label{eq:lemma-max-1-action}
        \sum_{t=1}^m \min\c{1, \norm{w_t}^2_{V_{t-1}^{-1}}} \leq 2\log\p{\frac{\det V_m}{\det V_0}}. 
    \end{equation}
\end{lemma}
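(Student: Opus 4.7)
The plan is to apply the standard elliptical potential (log-determinant) argument familiar from linear bandit regret analyses. The bound has exactly the form that one obtains from telescoping a rank-one determinantal update, so I would follow that recipe rather than try anything more sophisticated.

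First, I would exploit the recursion $V_t = V_{t-1} + w_t w_t^\top$ together with the initial positive definiteness $V_0 = \lambda I + \Lambda \succ 0$ (guaranteed by Assumption \ref{assum:effective-resistance-bound} and Lemma \ref{lemma:psd-L}, and inductively preserved by every rank-one update). The matrix determinant lemma then gives
\begin{equation*}
\det V_t \;=\; \det V_{t-1} \cdot \left(1 + \|w_t\|_{V_{t-1}^{-1}}^2\right).
\end{equation*}
Taking the product across $t = 1,\dots,m$ and then logarithms telescopes to
\begin{equation*}
\log \frac{\det V_m}{\det V_0} \;=\; \sum_{t=1}^m \log\!\left(1 + \|w_t\|_{V_{t-1}^{-1}}^2\right).
\end{equation*}

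Second, I would invoke the elementary scalar inequality $\min\{1,x\} \leq 2\log(1+x)$, valid for all $x \geq 0$. This is proved by splitting into two cases: on $[0,1]$, the function $g(x) = 2\log(1+x) - x$ satisfies $g(0) = 0$ and $g'(x) = 2/(1+x) - 1 \geq 0$, so $g \geq 0$; on $[1,\infty)$, $2\log(1+x) \geq 2\log 2 > 1 = \min\{1,x\}$. Applying this pointwise with $x = \|w_t\|_{V_{t-1}^{-1}}^2$, summing, and combining with the telescoped identity yields the claim.

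The main obstacle here is bookkeeping rather than conceptual depth: one must verify that $V_{t-1}^{-1}$ exists throughout the sum (immediate from the positive definiteness of $V_0$), and confirm that the factor of $2$ in the scalar inequality is tight enough to produce the stated bound without sharpening. Neither step is subtle, which is presumably why the authors invoke the result as a standard lemma rather than reproving it.
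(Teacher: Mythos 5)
Your proof is correct and is exactly the standard elliptical potential argument: the paper does not reprove this lemma but cites it from the literature (e.g., \cite[Lemma 11]{abbasi-yadkori_improved_2011}, \cite[Lemma 19.4]{lattimore_bandit_2020}), and those sources prove it precisely as you do, via the rank-one determinant identity $\det V_t = \det V_{t-1}\bigl(1+\norm{w_t}_{V_{t-1}^{-1}}^2\bigr)$, telescoping, and the scalar bound $\min\c{1,x}\leq 2\log(1+x)$. The only caveat worth noting is notational: your argument needs the recursion $V_t = V_{t-1} + w_t w_t^\T$ with $V_0 = \Lambda \succ 0$ (or $\lambda I + \Lambda$), which is the convention the cited lemma uses, whereas the paper's displayed definition $V_t = \Lambda + \sum_{s=1}^{t-1} w_s w_s^\T$ is off by one; this is an inconsistency in the paper, not a gap in your proof.
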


We then apply \cite[Lemma 9]{abbasi-yadkori_improved_2011} under our Assumption \ref{assum:reporting-scenarios}; with probability at least $1-\delta$,
\begin{equation}
    \norm{\xi}_{V_{t}^{-1}} \leq \frac{1}{2} \Delta^2 \norm{\Lambda^{-1} w_t}_2^2 \log \p{\frac{\det V_t^{1/2}}{\delta \det \Lambda^{1/2} }}.
\end{equation}
By \cite[Thm. 2]{abbasi-yadkori_improved_2011}, for any $w \in \R^n$ and for all $t\geq 1$, as $\norm{\psi^*}_\Lambda \leq C$, then with probability at least $1-\delta$ it holds that
\begin{equation}
    \abs{\ip{w}{\hat \psi_t - \psi^*}} \leq \norm{w}_{V_t^{-1}} \underbrace{\p{ \sigma_\star \sqrt{ 2 \log \frac{\det V_t^{1/2 } }{ \delta \det\Lambda^{1/2} } } +C }}_{:=c_t},
\end{equation}
where $\sigma_\star,C$ are as defined in the bounds above.
Then, the regret bound for $b=1$ is found by directly invoking \cite[Thm. 1]{valko_spectral_bandits_2014} and Lemma~\ref{lemma:max-1-action}. For $b \neq 1$, a $\tilde{\setO}(d\sqrt{bm})$ bound can be produced by combining \cite[Thm. 1]{yue_linear_2011} and \cite[Thm. 1]{valko_spectral_bandits_2014}, then applying Lemma~\ref{lemma:reward-properties}.
\end{proof}






\end{document}